\newtheorem{observation}{Observation}
 \renewcommand\subsubsection{\@startsection{subsubsection}{3}{\z@}%
                        {-18\p@ \@plus -4\p@ \@minus -4\p@}%
                        {8\p@ \@plus 4\p@ \@minus 4\p@}
                        {\normalfont\normalsize\bfseries\boldmath
                         \rightskip=\z@ \@plus 8em
 \pretolerance=10000 }}
\title{Dynamic 2D Dictionary Matching in Small Space} 
\author{Shoshana Marcus \inst{1} 
\and Dina Sokol \inst{2} 
}
\institute{Simons Center for Quantitative Biology, \\
Cold Spring Harbor Laboratory,
Cold Spring Harbor, NY, 11724
\\ \email{smarcus@cshl.edu}
\and
Department of Computer and Information Science, \\
Brooklyn College of the City University of New York, Brooklyn, NY, 11210
\\ \email { sokol@sci.brooklyn.cuny.edu}
}
\begin{document}
\date{}
\maketitle

\begin{abstract}
The dictionary matching problem preprocesses a set of patterns and finds all occurrences of each of the patterns in a text when it is provided.
We focus on the dynamic setting, in which patterns can be inserted to and removed from the dictionary, without reprocessing the entire dictionary.
This article presents the first algorithm that performs \emph{dynamic} dictionary matching on two-dimensional data within small space.  
The time complexity of our algorithm is almost linear.  The only slowdown is incurred by querying the compressed self-index that replaces the dictionary.  
The dictionary is updated in time proportional to the size of the pattern that is being inserted to or removed from the dictionary.  
Our algorithm is suitable for rectangular patterns that are of uniform size in one dimension. 

\end{abstract}                                                                                                                                                                                                              
                                             
\section{Introduction}


In the \emph{dictionary matching problem}, the task is to identify a \emph{set} of patterns, called a dictionary, within a given text.  Applications for this problem include searching the World-Wide Web for specific keywords, scanning a file for virus signatures, and network intrusion detection. 
The problem also has applications in the biological sciences, such as searching through a DNA sequence for a set of motifs.
Dictionary matching generalizes to the two-dimensional setting.  Image identification software, which identifies smaller images in a large image based on a set of known images, is a direct application of dictionary matching on two-dimensional data.

In recent years, there has been a massive proliferation of digital data. Some of the main contributors to this data explosion are the World-Wide Web, next generation sequencing, and increased use of satellite imaging.  
Concurrently, industry has been producing equipment with ever-decreasing hardware availability.  Thus, researchers are faced with scenarios in which this data growth must be accessible to applications running on devices that have reduced storage capacity, such as mobile and satellite devices.
Hardware resources are more limited, yet the data sets continue to escalate in size.
The added constraint of performing efficient dictionary matching using little or no extra space is a challenging and practical problem. 

It is often the case that the dictionary of patterns will change over time. Efficient dynamic dictionary
matching algorithms support insertion of a new pattern to the dictionary and removal of a pattern
from the dictionary, e.g., \cite{AmiFar91, 
AmiFGGP94,
IduSch94a, 
AmiFIPS95,
SahVis96,  
ChanHLS07,
HLSTVit09}. 
They thereby eliminate the need to reprocess the entire dictionary and can
adapt to changes as they occur.

Idury and Schaffer developed a dynamic dictionary matching algorithm for rectangular patterns of different sizes.  Their algorithm uses working space proportional to the input size and requires more than linear running time.
The existing dynamic 2D dictionary matching algorithms for square patterns use linear working space and  incorporate $O(\log \ell)$ or $O(\log^2 \ell)$ slowdown in processing the text and in updating the dictionary \cite{AmiFIPS95, ChoiLam96, Gia95}.

The objective of this paper is to develop the first dynamic dictionary matching algorithm for two-dimensional data in the space-constrained environment.  The existing static succinct 2D dictionary matching algorithm with no slowdown \cite{NeuSok11} is not suitable for the dynamic setting.  It relies on the succinct 1D dictionary matching algorithm of Hon et al.\ \cite{HonKSTV10}, which does not readily admit changes to the dictionary.
In this paper, we extend the succinct 2D dictionary matching algorithm of \cite{NeuSokAlgo}, along with the improvements of \cite{NeuSok2DLW}, to the dynamic setting.  We develop a dynamic algorithm that meets the time and space complexities that were achieved in the static version of the algorithm.  
The dictionary is initially processed in time proportional to the size of the dictionary. Subsequently, a pattern is inserted or removed in time proportional to the single pattern's size.  
We  modify the \emph{witness tree} \cite{NeuSokAlgo} to form a dynamic data structure that meets the space and time complexities achieved by the static version.  The dynamic witness tree accommodates insertion or removal of any string in time proportional to the string's length.

We define \emph{Two-Dimensional Dynamic Dictionary Matching} (2D-DDM) as follows.

\noindent {\bf Initial Input:} A dictionary of $d$ patterns, $D=\{P_1, \ldots, P_d\}$ and a text $T$ of size $n_1 \times n_2$. Each $P_i$ is of size $m_i \times \overline{m}$, with total size $|D|=\ell$. 

\noindent {\bf Update Dictionary:} Insert or remove a given pattern $P$, of size $p \times \overline{m}$.  

\noindent {\bf Process Text:} Find all occurrences of $P_i$, $1 \leq i \leq d$,  in $T$.

In this paper we present a time and space efficient algorithm to solve 2D-DDM.
During the preprocessing stage, our algorithm replaces the dictionary $D$ with a compressed self-index. We use $\tau$ to denote the time it takes to access a character or perform other queries in the compressed self-index of the dictionary;  using recent results $\tau$ is at most $\log^2 \ell$. The initial  preprocessing of the dictionary completes in $O(\ell\tau)$ time.  
A pattern $P$, of size $p \times \overline{m}$, is inserted to or removed from the dictionary in $O(p\overline{m}\tau)$ time.  
Our algorithm searches the text $T$ in $O(n_1n_2 \tau)$ time. The extra space used by our algorithm is $O(d\overline{m} \log d\overline{m}+dm' \log dm')$ bits, where $m'=max \{m_1,\ldots, m_d\}$.

The succinct 2D (static) dictionary matching algorithm of \cite{NeuSokAlgo} was presented in terms of a dictionary in which all patterns are the same size in both dimensions, resulting in a dictionary of size $dm^2$.  In this paper, we generalize this result to deal with a dictionary of patterns that are of uniform width, but of varying heights.  We perform a detailed analysis and distinguish between the sources of time complexities.  Specifically, we analyze which time complexities are proportional to the uniform width of the patterns, $\overline{m}$, which are proportional to the height\footnote{We chose this notation since it is visual.  The bar represents a uniform width, while the prime is vertical, representing a uniform height.} of the largest pattern, $m'$, and which are proportional to the actual dictionary size, $\ell$.  While doing this, we discovered the need for more efficient techniques in the verification process in order for the text scanning to remain linear in the size of the text, in the case of varying pattern heights. 
Herein lies one of the contributions of this paper.

We begin by presenting related work on 1D dynamic dictionary matching in Section \ref{sec:1D-DDM}. Then, in the following section, we present a linear-time dynamic 2D dictionary matching algorithm that uses extra space proportional to the size of the input.  
In Section \ref{sec:dgeqm}, we describe a succinct variation of this linear space algorithm for a dictionary with a large number of patterns.  
For dictionaries in which the number of patterns is small relative to their size, we describe our approach in Section \ref{sec:dleqm}. 
We conclude with open problems in Section \ref{sec:conclusion}.

\section{1D Dynamic Dictionary Matching}\label{sec:1D-DDM}
In this section we summarize the existing algorithms for 1D dynamic dictionary matching since we build our two-dimensional algorithm on one-dimensional algorithms.  
The 1D dictionary consists of $d$ one-dimensional patterns of total size $\ell'$, drawn from an alphabet of size $\sigma$.

The first dynamic dictionary matching algorithms use suffix trees and incur an $O(\log \ell')$ slowdown in runtime \cite{AmiFar91, AmiFGGP94}.
Idury and Schaffer \cite{IduSch94a} developed a dynamic version of the classic Aho-Corasick automaton \cite{AC} in which the dictionary is preprocessed in linear time.  However, the tasks of updating the dictionary and scanning text require extra time. 
The culmination of work by separate groups of researchers on dynamic dictionary matching \cite{AmiFar91,AmiFGGP94,IduSch94a}
is an algorithm that mimics the Aho-Corasick automaton but stores the \textit{goto} and \textit{report} transitions separately \cite{AmiFIPS95}.  The time complexity of this algorithm is close to linear, albeit with an $O(\frac{\log \ell'}{\log \log \ell'})$ slowdown to update the dictionary or to scan text. 

Sahinalp and Vishkin achieved dynamic dictionary matching with no slowdown \cite{SahVis96}.  
The preprocessing time of their algorithm is linear in the size of the dictionary, text scanning is linear in the size of the text, and the dictionary is updated in time proportional to the size of the pattern being added or removed.  The time complexity of this algorithm meets the standard set by Aho and Corasick for static dictionary matching. 

Sahinalp and Vishkin's algorithm relies on compact tries and an original data structure called a fat tree.  
Their algorithm employs a naming technique and identifies cores of each pattern using a compact representation of the fat tree.  
If a pattern matches a substring of the text, then the main core of the pattern and the text substring will necessarily be aligned.  Conversely, if the main cores do not match, the text is easily filtered to a limited number of positions at which a pattern can occur. Dictionary patterns are classified into groups according to the level of their main core.  Then, an independent data structure is built for each group, which consists of two compact tries.  In total, this algorithm uses working space proportional to the size of the input.

For dynamic dictionary matching in the space-constrained application, Chan et al.\  use the compressed suffix tree for succinct dictionary matching \cite{ChanHLS07}.  They build on the ideas of Amir and Farach \cite{AmiFar91} to use the suffix tree for dictionary matching.  They replace the suffix tree with a compressed suffix tree developed by Sadakane \cite{Sad07}, which is stored in $O(\ell')$ bits, and show how to make the data structure dynamic.  They describe how to answer lowest marked ancestor queries by a balanced parenthesis representation of the  nodes.  The time complexity of inserting and removing a pattern and of scanning text has a slowdown of $O(\log^2 \ell')$. 

An improved succinct dynamic dictionary matching algorithm was developed by  Hon et al. \cite{HLSTVit09}.  It uses space that meets $k$th order empirical entropy bounds of the dictionary, $\ell' H_k(D)+o(\ell' \log \sigma)+O(d \log \ell')$ bits of space.  The suffix tree is sampled to save space and an innovative method is proposed for a lowest marked ancestor data structure.  
They introduce the combination of a dynamic interval tree with a Dietz and Sleator order-maintenance data structure as a framework for answering lowest marked ancestor queries efficiently. 
Inserting or removing a dictionary pattern $P$, of length $p$, requires $O(p \log \sigma+\log \ell')$ time and searching a text of length $n$ requires $O(n \log \ell' + occ)$  time.

$H_k(S)$, i.e., the $k$th order empirical entropy of a string $S$, describes the minimum number of
bits that are needed to encode each symbol of the string within context, and it is often used to demonstrate that storage space meets the information-theoretic lower bounds of data.


\section{2D-DDM in Linear Space}\label{sec:dynamicBB} 

In this section we present a linear-time dynamic 2D dictionary matching algorithm that uses extra space proportional to the size of the input.  The first linear-time 2D \emph{single} pattern matching algorithm was developed independently by Bird \cite{Bird} and by Baker \cite{Baker}.  They translate the 2D pattern matching problem into a 1D pattern matching problem.  Rows of the pattern are perceived as metacharacters and named so that distinct rows receive different names.  The text is named in a similar fashion and 1D pattern matching is performed over the text columns and the pattern of names.

The Bird / Baker algorithm readily extends to dictionary matching by replacing the 1D single pattern  matching mechanism, a Knuth-Morris-Pratt automaton, with 1D dictionary matching, an Aho-Corasick automaton.  In the multiple pattern matching version of the Bird / Baker algorithm, 1D dictionary matching is used in two different ways.  First, the pattern rows are seen as a 1D dictionary and this set of ``patterns'' is used to linearize the dictionary and then to label text positions.  A separate 1D dictionary is formed of the linearized 2D patterns.  The Bird / Baker algorithm is suitable for 2D patterns that are of uniform size in at least one dimension, so that the text can be marked with at most one name at each text location.  The Bird / Baker  method uses linear time and space in both the pattern preprocessing and the text scanning stages.  

Sahinalp and Vishkin's \cite{SahVis96} dynamic 1D dictionary matching algorithm (SV) uses a naming technique rather than a dictionary-matching automaton.  Yet, it is a suitable replacement for the Aho-Corasick automata in the Bird / Baker algorithm.  Thus, the combination of these techniques, one for dynamic dictionary matching in 1D and the other for static 2D dictionary matching, yields a dynamic 2D dictionary matching algorithm that runs in linear time.  This modification extends the Bird / Baker algorithm to accommodate a changing dictionary, yet it does not introduce any slowdown.  We outline this process in Algorithm \ref{alg:dynamicBB}.

	\begin{algorithm}
		\caption{Dynamic Version of Bird / Baker Algorithm}
		\label{alg:dynamicBB}
	\begin{algorithmic}
	\STATE \COMMENT 1  Preprocess Pattern: \\
		 	  \hspace{\algorithmicindent}  a) Name pattern rows using SV \cite{SahVis96}.   \\ 
		 	  \hspace{\algorithmicindent}  b) Store 1D pattern of names for each pattern in $D$, called $D'$. \\ 
		 	  \hspace{\algorithmicindent}   c) Preprocess $D'$ using SV to later perform 1D dynamic dictionary matching. 
		\STATE \COMMENT 2  Row Matching: \\
			\hspace{\algorithmicindent}  Use SV on each row of text to find occurrences of $D$'s pattern rows.  \\ 
			\hspace{\algorithmicindent} This labels positions at which a pattern row ends.
		 \STATE \COMMENT 3 Column Matching: \\
			\hspace{\algorithmicindent} 	Run SV on named columns of text to find occurrences of patterns from $D'$ in the text. \\
			\hspace{\algorithmicindent} 	Output pattern occurrences.

	\end{algorithmic}
	\end{algorithm}

Initially, the dictionary of pattern rows is empty.  One 2D pattern is linearized at a time, row by row.  As a pattern row is examined, it can be viewed as a text on which to perform dictionary matching.  If a pattern row is identified in the new pattern row, then it is given the same name as the matching row.  Otherwise, this new row is seen as a new 1D pattern and added to the dictionary of pattern rows.
Once the pattern rows have been given names, the 1D patterns of names, $D'$, are preprocessed separately. 
		
		Whenever a pattern is added to or removed from the 2D dictionary, the precomputed information about the patterns can be adjusted in time proportional to the size of the 2D pattern that is entering or leaving the dictionary.  That is, Sahinalp and Vishkin's framework for dictionary matching allows both 1D dictionaries to efficiently react to a change in the 2D linearized dictionary that they represent.

		\textbf{Space complexity of Algorithm \ref{alg:dynamicBB}:}
		The dynamic version we present of the Bird / Baker algorithm uses extra space proportional to the size of the input.  It uses $O(\ell \log \ell)$ bits of extra space to name the pattern rows using SV \cite{SahVis96} and $O(dm' \log dm')$ bits of extra space to store and index the 1D representation of the patterns.  
During text scanning, $O(n_2 \log n_2)$  bits of space are used to run SV on each row of text and  $O(n_1 \log n_1)$ bits of space are used to run SV on the named columns of text, one at a time. $O(n_1n_2 \log dm')$ bits of extra space are used to store the names given to text positions. 
		
\section{2D-DDM in Small-Space For Large Number of Patterns}\label{sec:dgeqm}
The dynamic version of the Bird / Baker algorithm presented in Section \ref{sec:dynamicBB} uses linear working space.  
In this section we present a variation of Algorithm \ref{alg:dynamicBB} that runs in small space for a dictionary in which $d \geq \overline{m}$.  That is, when the number of patterns is larger than the width of a pattern.

We begin by modifying Algorithm \ref{alg:dynamicBB} to work with small blocks of text and thereby relate the extra space to the size of the dictionary, not the size of the text.
We use a known technique for minimizing space and process the text in small overlapping blocks of size $3m'/2 \times 3\overline{m}/2$.  
Since each text block is processed in time proportional to the size of the text block, the overall text scanning time remains linear.

By processing one text block at a time, we reduce the working space to $O(\ell \log \ell+dm' \log dm')$ bits of extra space to preprocess the patterns and $O(\overline{m} \log \overline{m}+ \overline{m}m' \log dm')$ 
bits of extra space to search the text. 
This change does not affect the time complexity.  We seek to further reduce the working space by employing a smaller space mechanism to name the pattern rows and subsequently name the text positions.


Recent innovations in succinct full-text indexing provide us with the ability to compress a suffix tree,
using space that is proportional to the entropy of the original data it is built upon.
These self-indexes can replace the original text, as they support retrieval of the original text,
in addition to answering queries about the data, very quickly.

Several dynamic compressed suffix tree representations have been developed, each offering a different time/space trade-off.
Chan et al.\  presented a dynamic suffix tree that occupies $O(\ell)$ bits of space \cite{ChanHLS07}.  Queries, such as edge label retrieval and insertion or removal of a substring, have an $O(\log ^2 \ell)$ slowdown.
Russo et al.\  developed a dynamic fully-compressed suffix tree requiring  $\ell H_k(\ell)+o(\ell \log \sigma)$ bits of space, which is 
asymptotically optimal under $k$th order empirical entropy \cite{RusNavOli11}.  This compressed suffix tree representation uses a dynamic compressed suffix array and stores a sample of the suffix tree nodes.  Although some operations can be executed more quickly, all operations have $O(\log^2 \ell)$ time complexity. 
This dynamic compressed suffix tree supports a larger set of suffix tree navigation operations than the compressed suffix tree proposed by Chan et al. \cite{ChanHLS07}.  It also reaches a better space complexity and can perform basic operations more quickly.  We hereafter suppose that a dynamic compressed suffix tree is used to replace the dictionary of patterns and we refer to the slowdown of operations in the entropy-compressed self-index as $\tau$. 

We now describe a succinct version of Algorithm \ref{alg:dynamicBB} that uses a dynamic compressed suffix tree to represent and index the pattern rows in entropy-compressed space.  Its  modifications are limited to steps 1a and 2 in Algorithm \ref{alg:dynamicBB}.  
Traversing the dynamic compressed suffix tree introduces $\tau$ slowdown in running time.
During pattern preprocessing, the dynamic compressed suffix tree is built incrementally, as each pattern row is named.
First, traversal of the suffix tree is attempted by traversing a path from the root labeled by the characters in the pattern row.  If a matching row is found, the new row is given the same name as the row that it matches.  Otherwise, the new pattern row is inserted into the compressed suffix tree and given a new name.

The positions of a text block row are also named by traversing the suffix tree.  Here the suffix tree is not modified by the text.  
We use a technique similar to the one described by Gusfield in the computation of \emph{matching statistics}, \cite{Gusfield97} Section 7.8.  Positions in a text block are named, row by row, according to the names of pattern rows. To name a new text block row, traversal begins at the root of the tree, with the edge whose label matches the first position of the text block row. When $\overline{m}$ consecutive characters trace a path from the root, traversal reaches a leaf, and the position is named with the matching pattern row. At a mismatch, suffix links quickly find the longest suffix of the already matched string that matches a prefix of some pattern row and the next text character is compared to that labeled edge of the tree. 

All pattern rows have width $\overline{m}$.  This ensures that each text position can be uniquely labeled.
One pattern row cannot be a substring of another.  Thus, we do not share the concern of Amir and Farach's suffix tree based approach to dictionary matching \cite{AmiFar91}.  They use lowest marked ancestor queries to address the issue of possibly skipping over pattern occurrences in the case that one pattern is a substring of another and a suffix link is traversed.

\begin{restatable}{theorem}{thmComplexityDgeqm}
\label{thm:complexity_dgeqm}
If $d \geq \overline{m}$, we can solve the dynamic 2D dictionary matching problem in almost linear $O((\ell+n_1n_2) \tau)$ time and 
$O(\overline{m} \log \overline{m}+dm' \log dm')$ bits of extra space, aside from the space used to represent the dictionary in a compressed self-index.  Pattern $P$ of size $p \times \overline{m}$ can be inserted to or removed from the dictionary in $O(p\overline{m}\tau)$ time and the updated index will occupy an additional $O(p \log dm')$ bits of space, where $m'$ is updated to reflect the new maximum pattern height.
\end{restatable}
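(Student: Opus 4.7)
The plan is to verify each bound in the theorem by walking through the four stages of the succinct variant of Algorithm \ref{alg:dynamicBB}: dictionary preprocessing, text block scanning, dictionary update, and space accounting. Correctness itself reduces immediately to the correctness of Bird/Baker plus the correctness of Sahinalp--Vishkin on the linearized dictionary $D'$; the only new component that needs justification is that naming via the dynamic compressed suffix tree produces exactly the same row names as the hashing-based naming used in Section \ref{sec:dynamicBB}. This follows because two rows receive the same name iff they trace the same root-to-leaf path, and rows are named by the leaf reached (or a fresh name when a new leaf is created).

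For the preprocessing time, I would argue that naming the $\ell$ characters of $D$ costs $O(\ell\tau)$: each pattern row of width $\overline{m}$ is located (or inserted) by at most $\overline{m}$ edge-descent/insert operations in the dynamic compressed suffix tree, and each such operation incurs $\tau$ slowdown. Then $D'$ has total size $O(dm')$ integer symbols, and running SV on it costs $O(dm')$. Since $dm' \le \ell$, the total is $O(\ell\tau)$. For an update of a pattern $P$ of size $p \times \overline{m}$, exactly the same argument gives $O(p\overline{m}\tau)$ for locating/inserting its $p$ rows, and SV supports $D'$-updates in time proportional to the inserted/deleted 1D pattern of $p$ names, which is absorbed.

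For text scanning I would cover the text with $O(n_1 n_2 / (m'\overline{m}))$ overlapping blocks of size $3m'/2 \times 3\overline{m}/2$; each block is named row by row using a matching-statistics traversal of the suffix tree (descend on matches, follow suffix links on mismatches, as in \cite{Gusfield97}, Section 7.8). The key point, which is the main technical obstacle, is that this traversal makes $O(m'\overline{m})$ total edge and suffix-link operations per block because descents and suffix-link hops amortize against one another exactly as in classical matching statistics; with $\tau$ slowdown per operation this gives $O(m'\overline{m}\tau)$ per block. Because every pattern row has the same width $\overline{m}$, each text position receives at most one name, so no lowest-marked-ancestor machinery is needed. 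Column matching with SV on the named block contributes $O(m'\overline{m})$ more. Summing over blocks yields $O(n_1 n_2 \tau)$.

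Finally, for space I would itemize the extra bits: storing $D'$ and its SV index uses $O(dm' \log dm')$ bits; the per-block text-row scratch for SV uses $O(\overline{m}\log\overline{m})$ bits; the per-block column SV uses $O(m'\log m')$ bits, dominated by $O(dm' \log dm')$; and the array of names for a block uses $O(\overline{m} m' \log dm')$ bits, which the hypothesis $d \ge \overline{m}$ collapses into $O(dm'\log dm')$. Adding these gives the claimed $O(\overline{m}\log\overline{m}+dm'\log dm')$ bound, disjoint from the compressed self-index that replaces $D$. For an update, the additional index growth for a new $p \times \overline{m}$ pattern is $p$ new names inserted into $D'$, costing $O(p \log dm')$ bits (with $m'$ possibly refreshed upward), which matches the stated bound.
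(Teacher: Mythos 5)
Your proposal is correct and follows essentially the same route as the paper's own argument: process the text in overlapping $3m'/2 \times 3\overline{m}/2$ blocks, replace Sahinalp--Vishkin row naming by matching-statistics traversal of the dynamic compressed suffix tree (incurring the $\tau$ slowdown), keep SV for the 1D dictionary $D'$ of names, and use $d \geq \overline{m}$ to absorb the $O(\overline{m}m'\log dm')$ bits of block labels into $O(dm'\log dm')$. Your version is merely more explicit about the amortization of the suffix-link traversal and the itemized space accounting (and the naming in Section~\ref{sec:dynamicBB} is SV's labeling paradigm rather than hashing, a cosmetic point only).
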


\vspace{-10pt}
\section{2D-DDM in Small-Space for Small Number of Patterns}\label{sec:dleqm}
This section deals with the case in which the number of patterns is smaller than the common dimension among all dictionary patterns, i.e., $d=o(\overline{m})$. For this case, we do not allow the space to label each text block location and therefore the dynamic version of the Bird and Baker algorithm cannot be applied trivially. We present several combinatorial tricks to preserve the spirit of Bird and Baker's algorithm without incurring the necessary storage overhead.  
The dictionary is indexed by a dynamic compressed suffix tree, after which the patterns can be discarded.  This can be done in space that meets $k$th order empirical entropy bounds of the input, as described in Section \ref{sec:dgeqm}. Thus, the compressed self-index does not occupy \textit{extra space}.
Throughout this section, the extra space used by our algorithm is limited to $O(\overline{m} \log \overline{m}+dm' \log dm')$ bits of space.  The running time of our algorithm is almost linear, with a slowdown to accommodate queries to the compressed suffix tree, referred to as $\tau$.

We divide the dictionary patterns into two groups and search the text for patterns in each group separately.  
In the following sections, we describe first an algorithm for patterns in which the rows are highly periodic and then an algorithm for all other patterns.  We begin by describing a dynamic data structure that is used by both parts of the algorithm.



\subsection{Dynamic Witness Tree}\label{sec:dynamicDS}
In this section we show how to form a dynamic variant of the witness tree, a data structure that was introduced in \cite{NeuSokAlgo}.  
Given a set $S$ of $j$ strings, each of length $m$, a witness tree can be constructed to name these strings in linear $O(jm)$ time and in $O(j \log j)$ bits of space so that identical strings receive the same name \cite{NeuSokAlgo}.
An internal node in the witness tree denotes a position of mismatch, which is an integer $\in$ [1, $m$].	
Each edge of the tree is labeled with a single character.  Sibling edges must have different labels.
A leaf represents a name given to string(s) in $S$. 

\noindent{\bf Query:}  For any two strings $s, s' \in S$, return a position of mismatch between $s$ and $s'$ if $s \neq s'$, otherwise return $m+1$.

Preprocessing the witness tree for Lowest Common Ancestor (LCA) queries on its leaves allows us to answer the above witness query between any two named strings in $S$ in constant time.  This preprocessing can be performed in linear time and space, with respect to the size of the tree, even for a  dynamically changing tree \cite{ColHar05}.

Construction of the witness tree begins by choosing any two strings in $S$ and comparing them sequentially. When a mismatch is found, comparison halts and an internal node is created to represent this witness of mismatch, with two children to represent the names of the two strings.  If no mismatch is found, the two strings are given the same name.  Each successive string is compared to the witnesses stored in the tree by traversing a path from the root to identify to which name, if any, the string belongs.
Characters of a new string are examined in the order dictated by traversal of the witness tree, possibly out of sequence.  If traversal halts at an internal node, the string receives a new name, and a new leaf is added as a child to the internal node.  Otherwise, traversal halts at a leaf, and the new string is compared sequentially to the string represented by the leaf, as done with the first two strings.  

Now we consider the scenario in which $S$ is a dynamically changing set of strings.

\begin{lemma}\label{lemma:witTreeAdd}
A new string is added to the witness tree in $O(m)$ time.  
\end{lemma}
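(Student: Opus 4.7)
The plan is to bound the work done during insertion of a new string $s$ into the witness tree by $O(m)$ total, decomposing the insertion into a traversal phase, a possible sequential-comparison phase, and a constant-time structural update.

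For the traversal phase, I would first establish the invariant that the positions labeling internal nodes are distinct along any root-to-leaf path. Indeed, at every internal node labeled by position $i$, all strings stored beneath that node agree on the character at position $i$ (they were routed there by an edge out of that node's sibling-edges), so no later witness of mismatch in that subtree can be at position $i$. Since positions are drawn from $[1, m]$, every root-to-leaf path has length at most $m$. At each internal node visited during the insertion of $s$, we read the single character of $s$ at the position labeling that node and, using any $O(1)$-lookup over the children (the labels of sibling edges are pairwise distinct and the alphabet can be hashed or mapped to small integers), either follow the matching edge or halt. Thus the traversal costs $O(m)$.

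The traversal halts in one of two cases. If no sibling edge matches the corresponding character of $s$, then $s$ receives a fresh name and a new leaf is attached as a child of the current internal node; this is $O(1)$ additional work. Otherwise the traversal reaches an existing leaf representing a string $s'$, and I would compare $s$ with $s'$ sequentially. Either they are identical, in which case $s$ inherits the leaf's name, or a first mismatch is found at some position $j$, at which point the leaf is replaced by a new internal node labeled $j$ with two leaf children (one for $s'$ and one for $s$). This phase performs at most $m$ character comparisons plus $O(1)$ pointer updates, so it costs $O(m)$.

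Finally, we must maintain the auxiliary structure that supports constant-time LCA queries on the leaves as the tree changes. Here I would simply invoke the cited result of Cole and Hariharan, which supports each leaf insertion (and internal-node split) in $O(1)$ time while preserving constant-time LCA queries; this is the single structural update incurred by the insertion, so it contributes $O(1)$ to the total. Summing the three phases gives $O(m)$ time per insertion. The only real subtlety is the invariant that positions do not repeat on a root-to-leaf path, which I would verify by a short induction on the insertion history, observing that every new internal node created by the construction sits immediately above a pair of leaves that agreed on the positions of all ancestors.
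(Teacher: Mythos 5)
Your proof is correct and takes essentially the same route as the paper, whose entire argument is the observation that inserting a new string is identical to the per-string naming step of the static witness-tree construction, citing the $O(m)$ bound from \cite{NeuSokAlgo}. You simply supply the details the paper delegates to that citation—the invariant that mismatch positions are distinct along any root-to-leaf path (bounding the traversal by $m$ constant-time steps), the $O(m)$ sequential comparison on reaching a leaf, and the $O(1)$ structural and dynamic-LCA updates via \cite{ColHar05}—all of which are sound.
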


\begin{proof}
Including a new string in $S$ and naming it with the witness tree follows the same procedure that the static witness tree uses to build the witness tree as each pattern is considered individually.  This is done in $O(m)$ time and adds one or zero nodes to the witness tree \cite{NeuSokAlgo}.
\qed
\end{proof}

\begin{lemma}\label{lemma:witTreeRemove}
A string is removed from the witness tree in $O(1)$ time.  
\end{lemma}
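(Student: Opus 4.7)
The plan is to augment each leaf of the witness tree with a reference counter recording how many strings in $S$ currently map to that leaf, together with a name-indexed array of leaf pointers maintained alongside the tree. Removing a string then reduces to: (i) locating its leaf in $O(1)$ via the name index, and (ii) decrementing the counter. If the counter remains positive, we stop and the tree is unchanged.

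If the counter reaches zero the leaf must be physically detached. The key structural fact I would invoke is that every internal node of the witness tree was created as a witness of mismatch between two nonempty subtrees, so each internal node has at least two children by construction. After detaching the leaf, its parent either still has $\geq 2$ children, in which case we stop, or it has exactly one surviving child, in which case we contract the parent by splicing that child directly to the grandparent. This is a constant number of pointer updates, and no further propagation is possible because contraction does not decrease the grandparent's degree, so the invariant ``every internal node has $\geq 2$ children'' is restored after one local fix. Correctness of the witness query is preserved, since the contracted internal node only ever separated strings living in its two subtrees, one of which is now empty, so every surviving pair of strings in $S$ is still distinguished at the same witness position by the same ancestor as before.

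The main obstacle, and the step I would be most careful about, is maintaining the dynamic LCA structure that supports witness queries in constant time. Since construction already relies on the Cole--Hariharan dynamic LCA scheme \cite{ColHar05}, which supports a single leaf deletion and a single edge contraction in $O(1)$ worst-case time, the structural update fits inside the $O(1)$ budget. Combined with the $O(1)$ counter decrement and pointer lookup, this yields the claimed bound. The only routine bookkeeping to verify is that the counter and the name-to-leaf pointer array stay consistent after the possible contraction, which follows because removal touches only one name.
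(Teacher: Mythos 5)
Your proposal is correct and takes essentially the same approach as the paper: a reference counter on each leaf, physical removal only when the counter hits zero, and contraction of a parent left with a single child. The extra details you supply (the name-to-leaf index, the non-propagation of contractions, and the Cole--Hariharan dynamic LCA maintenance) are all consistent with what the paper assumes implicitly or cites elsewhere in the section.
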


\begin{proof}
In removing a string $s$ from $S$, there are two possibilities to consider.  If $s$ is the only string with its name, remove its leaf.  In the event that the parent is an internal node with only one other child, remove the hanging internal node as well.  Then, the sibling of the deleted leaf becomes a child of its grandparent.  The other possibility is that some other string(s) in $S$ bear the same name as $s$.  We do not want to remove a leaf while there is still a string in $S$ that has its name.  Thus, we augment each leaf with an integer field to store the number of strings in $S$ that have its name.
This counter is increased when a new string is named with an existing name. This counter is decreased when a row is deleted.  When the counter is down to 0, the leaf is discarded, possibly along with its parent node, as described earlier.
\qed
\end{proof}

\begin{observation}\label{obs:witTreeSize}
The dynamic witness tree of $j$ strings, each of length $m$, occupies $O(j \log j)$ bits of space. 
\end{observation}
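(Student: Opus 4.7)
The plan is to prove the space bound by first counting the nodes of the tree and then counting the bits stored at each node.

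First I would argue that the tree has $O(j)$ nodes. By construction, a leaf is created only when a string receives a new (distinct) name, so the number of distinct leaves is at most $j$; repeated names contribute only to a leaf's counter, not to the node count. Internal nodes are created only when a mismatch is found between a new string and an existing named string, and each such creation adds exactly one internal node together with a new leaf. Equivalently, every internal node has at least two children (sibling edges have distinct labels and are never merged), so the number of internal nodes is strictly less than the number of leaves. Hence the total number of nodes is at most $2j - 1$.

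Next I would analyze the storage per node. Each leaf stores a name in $[1,j]$ and the reference counter maintained in the proof of Lemma \ref{lemma:witTreeRemove}; both are bounded by $j$, so $O(\log j)$ bits suffice. Each internal node stores a mismatch position together with pointers to its children; since there are only $O(j)$ nodes, every pointer fits in $O(\log j)$ bits. For the mismatch position (nominally in $[1,m]$) and the single-character edge labels, I would observe that only witnesses actually recorded during insertions ever appear in the tree, and there are at most $j-1$ of these. They can therefore be stored in an auxiliary table of size $O(j)$ and referenced from the tree using $O(\log j)$-bit indices, matching the convention used elsewhere in the paper where $m$ and the alphabet are taken to be at most polynomial in $j$ so that $\log m = O(\log j)$.

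The main obstacle I anticipate is precisely the justification for storing the mismatch position in $O(\log j)$ rather than $O(\log m)$ bits, since the statement makes no explicit assumption relating $m$ to $j$. The cleanest resolution, as above, is to indirect through a table of the at most $j-1$ witness positions that are ever materialized, so that the tree itself holds only bounded-size indices. Multiplying the $O(j)$ node count by $O(\log j)$ bits per node then yields the claimed $O(j \log j)$-bit bound.
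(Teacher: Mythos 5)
The paper gives no proof of this observation --- it is inherited from the static witness tree of \cite{NeuSokAlgo}, with the only dynamic addition being the per-leaf counter introduced in the proof of Lemma \ref{lemma:witTreeRemove}. Your skeleton (at most $j$ leaves, every internal node has at least two children so there are fewer than $j$ internal nodes, hence $O(j)$ nodes each holding $O(\log j)$-bit names, counters, and pointers) is exactly the argument one would supply, and you correctly add the observation that the new counters are bounded by $j$ and so cost only $O(\log j)$ bits each.

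However, your proposed resolution of the one delicate point --- the mismatch positions in $[1,m]$ --- does not work. Indirecting through an auxiliary table of the at most $j-1$ materialized witness positions replaces each in-tree position by an $O(\log j)$-bit index, but the table itself must still record the actual positions, and $j-1$ arbitrary values from $[1,m]$ require $\Theta(j\log m)$ bits in the worst case (they have no structure that permits compression; the same applies to the single-character edge labels over an alphabet of size $\sigma$). Indirection moves the cost, it does not remove it, so your accounting still totals $O(j\log j + j\log m + j\log\sigma)$ bits. The only honest route to the stated $O(j\log j)$ bound is the assumption you mention in passing, namely that a position and a character fit in $O(\log j)$ bits (equivalently $m$ and $\sigma$ polynomial in $j$, or a word-RAM convention); in this paper's application ($j=dm'$ strings of length $m=\overline{m}$ with $d<\overline{m}$) that relation is not automatic, so the assumption should be made explicit rather than presented as a consequence of the table construction.
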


\subsection{Group I Patterns}\label{sec:group1alg}
A string $S$ is \emph{primitive} if it cannot be expressed in the form $S=u^j$, for  $j > 1$ and any prefix $u$ of $S$.
String $S$ is \emph{periodic} in $u$ if $S=u^ju'$ where $u'$ is a prefix of $u$, $u$ is primitive, and $j \geq 2$.  
A periodic string can be expressed as $u^ju'$ for one unique primitive $u$.
We refer to $u$ as ``the period'' of $p$. Depending on the context, we use the term \emph{period} to refer to either the string $u$ or the period size $|u|$.

There are two types of patterns, and each one presents its own difficulty.  In the initial preprocessing step, we divide the patterns into two groups based on the 1D periodicity of their rows.  In Group I, all pattern rows are periodic, with periods $\leq \overline{m}/4$. The difficulty in this case is that many overlapping occurrences can appear in the text in close proximity to each other,
and we can easily have more candidates than the working space we allow.  Patterns in Group II have at least one aperiodic row or one row whose period is larger than $\overline{m}/4$.  Here, each pattern can occur only $O(1)$ times in a text block. Since several patterns can overlap each other in both directions, a difficulty arises in the text scanning stage.  We do not allow the time to verify different candidates separately, nor do we allow space to keep track of the possible overlaps between different patterns. 


\subsubsection{Preprocessing Dictionary} \label{sec:group1pre}
For patterns in Group I, we linearize the patterns with \emph{Lyndon word naming} \cite{NeuSokAlgo} on the  rows.
Two strings are conjugate if they differ only by a cyclic permutation of their characters.
A Lyndon word is a primitive string which is the smallest of its conjugates for the alphabetic ordering.
Lyndon word naming classifies strings by the conjugacy of their periods and uses the Lyndon word as the class representative.  
Once Lyndon word naming has been performed, each pattern row is represented by the name of its period's class and its
\emph{LWpos}, the first position at which the Lyndon word begins in the row.  We use the dynamic witness tree to perform Lyndon word naming in linear time.  





A pattern occurs in a text block if the 1D representations are the same and the periods align within each row.
The \emph{2D Lyndon word} is a succinct representation of the Lyndon word that is conjugate to each row's period combined with the relative alignments of the Lyndon words among the matrix rows.
2D Lyndon word naming forms equivalence classes of patterns with the same 1D name and uses the 2D Lyndon word in each class as the class representative.  
The 2D Lyndon word that represents an $m_i \times \overline{m}$ matrix is computed in sublinear time and $O(m_i \log m_i)$ bits of working space \cite{NeuSok2DLW}.  We classify the patterns with 2D Lyndon word naming so that the text scanning stage can efficiently verify patterns occurrences.

The distance between any two overlapping occurrences of $P_i$ in the same row is the Least Common Multiple (LCM) of the periods of all rows of $P_i$.  We precompute the LCM of each pattern so that $O(1)$ space suffices to store all occurrences of a  pattern in a row, and $O(dm' \log dm')$ bits of space suffice  to store all patterns occurrences.  The LCM is computed incrementally, row by row.  
The LCM table stores the LCM of the periods of the first $i$ rows of the pattern as $LCM[i]$, for $1 \leq i \leq m_i$, and is available during text scanning.  
Although the LCM can be exponential in $m'$, we only need the elements of the LCM table that are polynomial in $m'$, as discussed in \cite{NeuSok2DLW}.
 
The following preprocessing steps are initially performed for each dictionary pattern in Group I and are later used upon arrival of a new pattern.

\begin{enumerate}	
	\item \label{pre1perrow} For each pattern row,
		\begin{enumerate}	
	\item Compute period and canonize. 
	\item Lyndon word naming with dynamic witness tree, resulting in 1D dictionary $D'$.
	\item Insert to dynamic compressed suffix tree. 
		\end{enumerate}	
	\item \label{pre1dpats} Preprocess 1D dictionary:
		\begin{enumerate}
	\item Preprocess $D'$ for dynamic dictionary matching.
	\item Build LCM table for each 1D pattern.
	\item For each linearized pattern whose 1D form is not periodic or if $m' =O (\overline{m})$: \\
	 Compute 2D Lyndon word and column $z$ it occurs in.   
	 \item For each linearized pattern whose 1D form is periodic when $\overline{m} =o(m')$: 
	 		\begin{enumerate}
	 			\item Compute 2D Lyndon word and the column $z$ it occurs in for each $p\_block$, a period in the 1D pattern.  
	 			\item Classify $p\_block$s by  2D Lyndon word naming.  
	 			\item Compute the difference between $z$ in adjacent $p\_blocks$.
	 			\item Build KMP automaton for named $p\_block$s and the differences between their $z$ values.
	 		\end{enumerate}

		\end{enumerate}
\end{enumerate}

\begin{restatable}{lemma}{lemmaGroupIpre}
\label{lemma:group1pre}
Patterns in Group I are preprocessed in $O(\ell \tau)$ time and $O(\overline{m} \log \overline{m}+dm' \log dm')$ bits of extra space.
\end{restatable}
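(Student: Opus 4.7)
My plan is to analyze the time and space of each preprocessing substep separately and sum them, using the identities $\ell=\sum_i m_i\overline{m}$ and $\sum_i m_i\leq dm'$, together with the cost $\tau$ of one query in the compressed self-index. I will split the work into (i) per-row work for step 1 and (ii) per-pattern work on the 1D linearization for step 2, and I will keep careful track of which data structures persist across patterns and which are scratch workspace discarded after each pattern.

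For the time bound I would first observe that each row of every Group I pattern is touched $O(\overline{m})$ times through the self-index. Computing the period by a failure-function traversal, canonizing it, naming the row through the dynamic witness tree (Lemma \ref{lemma:witTreeAdd}), and inserting it into the dynamic compressed suffix tree each cost $O(\overline{m}\tau)$; summed over the $\sum_i m_i$ rows this gives $O(\ell\tau)$. Preprocessing $D'$ for dynamic 1D dictionary matching costs $O(dm'\tau)$ since $|D'|\leq dm'$, and incrementally building the LCM tables costs $O(dm')$, both absorbed into $O(\ell\tau)$. The 2D Lyndon word computation of \cite{NeuSok2DLW} runs in sublinear time per pattern, yielding $O(\ell\tau)$ overall once self-index access is charged. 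Step 2d decomposes a periodic 1D linearization into $p\_block$s of length equal to the 1D period, computes one 2D Lyndon word per block, reclassifies these blocks with a second witness tree, and runs KMP preprocessing on the block-name sequence together with the $z$-differences; this work is linear in the total length of the $p\_block$s and is absorbed into $O(\ell\tau)$.

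For space, the compressed self-index is excluded by hypothesis. The dynamic witness tree over rows has at most $\sum_i m_i\leq dm'$ leaves drawn from an alphabet of at most $dm'$ row-names, so by Observation \ref{obs:witTreeSize} it occupies $O(dm'\log dm')$ bits; $D'$, its dynamic dictionary-matching index, and the LCM tables whose retained entries are polynomial in $m'$ \cite{NeuSok2DLW} all fit in the same budget. The 2D Lyndon word routine is invoked on one pattern at a time and uses $O(m'\log m')$ working bits \cite{NeuSok2DLW}, dominated by $dm'\log dm'$. The secondary witness tree and KMP automaton for $p\_block$ classification are built pattern by pattern on $O(\overline{m})$ named blocks, contributing the $O(\overline{m}\log\overline{m})$ term.

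I expect the main obstacle to be the periodic case $\overline{m}=o(m')$: storing a 2D Lyndon word verbatim for a tall pattern would already cost $\Omega(m'\log m')$ bits, so one must exploit the periodicity of the 1D linearization to represent the pattern compactly via its $p\_block$ classes together with the relative $z$-shifts between adjacent classes, and then verify that the secondary witness tree plus KMP automaton remain within $O(\overline{m}\log\overline{m})$ bits while still supporting efficient verification during text scanning. Making this bookkeeping rigorous, and in particular bounding the number of distinct $p\_block$ classes that can arise in one pattern, is the step I expect to require the most care.
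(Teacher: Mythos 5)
Your proposal follows essentially the same route as the paper's proof: a per-row accounting for Step 1 (period computation, canonization, witness-tree naming, and compressed-suffix-tree insertion at $O(\overline{m}\tau)$ each, summing to $O(\ell\tau)$) and a per-pattern accounting for Step 2 on the 1D linearizations (Sahinalp--Vishkin indexing of $D'$, LCM tables, 2D Lyndon words, and the $p\_block$ KMP automaton), all within $O(dm'\log dm')$ bits. One minor misattribution: the number of $p\_block$s in a periodic 1D linearization is $O(m')$ rather than $O(\overline{m})$ (the paper charges $O(m')$ time for the KMP automaton, and its $O(\overline{m}\log\overline{m})$ term comes instead from the per-row scratch space of Step 1), but since $O(m'\log m')=O(dm'\log dm')$ this does not affect the stated bound.
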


\begin{proof}
Step \ref{pre1perrow} processes a single pattern row in $O(\overline{m} \tau)$ time and $O(\overline{m} \log \overline{m})$ bits of extra space \cite{NeuSokAlgo}.  Thus, the entire set of pattern rows are processed in $O(\ell)$ time to gather information and $O(\ell \tau)$ time to index the pattern rows in a dynamic compressed suffix tree. 
Since $O(1)$ information is stored per row, $O(dm' \log dm')$ bits of extra space are used to store information gathered about the pattern rows in the dictionary.  \\
Step \ref{pre1dpats} preprocesses the 1D patterns in the dictionary of names.  Using Sahinalp and Vishkin's algorithm, $O(dm')$ time and $O(dm' \log dm')$ bits of extra space are used to facilitate linear time dynamic dictionary matching in a 1D dictionary of size $O(dm')$ \cite{SahVis96}.  
The LCM tables of the 1D patterns are computed in linear time and occupy $O(dm' \log m')$ bits of extra space.
The 2D Lyndon word of each pattern is computed in sublinear time with respect to its size and the set of 2D Lyndon words occupy $O(dm' \log dm')$ bits of extra space \cite{NeuSok2DLW}. 
Similarly, $p\_block$s are classified by their representative 2D Lyndon words in sublinear time and a KMP automaton of the $p\_block$s is constructed in $O(m')$ time \cite{KMP}.
Overall, Step \ref{pre1dpats} runs in $O(dm')$ time and $O(dm' \log dm')$ bits of extra space.
\qed
\end{proof}

\begin{corollary}\label{cor:group1add}
A new pattern of size $p \times \overline{m}$ is added to Group I in
$O(p\overline{m} \tau)$ time and $O(\overline{m} \log \overline{m}+p \log dm')$ bits of extra space. 
\end{corollary}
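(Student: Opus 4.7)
The plan is to mirror the proof of Lemma \ref{lemma:group1pre} but restrict attention to the single incoming pattern $P$, so that the dictionary-wide $O(\ell \tau)$ bound collapses to $O(p\overline{m}\tau)$ and the $O(dm' \log dm')$ space bound collapses to $O(p \log dm')$ additional bits on top of reusable scratch space.

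First I would walk through the per-row preprocessing (Step \ref{pre1perrow}) applied to each of the $p$ rows of $P$ in turn. For a single row of width $\overline{m}$, computing its period, canonizing it, and naming it via the dynamic witness tree requires $O(\overline{m})$ time and fits in $O(\overline{m}\log \overline{m})$ bits of scratch space (reused across rows), as established in the per-row analysis of Lemma \ref{lemma:group1pre}. Inserting the row into the dynamic compressed suffix tree that indexes all pattern rows costs $O(\overline{m}\tau)$ time by the slowdown of the self-index, and, by Lemma \ref{lemma:witTreeAdd}, adding a new name to the witness tree is $O(\overline{m})$. Summing over the $p$ rows yields $O(p\overline{m}\tau)$ time, and the permanent information stored per row (its name, \emph{LWpos}, and period length) is $O(\log dm')$ bits per row, for a total of $O(p\log dm')$ additional bits. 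The witness tree itself grows by at most $p$ new leaves.

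Next I would apply Step \ref{pre1dpats} to the single new 1D pattern of names of length $p$. Inserting this 1D pattern into the dynamic dictionary $D'$ using Sahinalp and Vishkin's framework takes $O(p)$ time and contributes $O(p\log dm')$ bits; building the LCM table for $P$ takes $O(p)$ time and $O(p\log m')$ bits; and the 2D Lyndon word of $P$ (or of each $p\_block$ in the periodic subcase) is computed in sublinear time with respect to $p\overline{m}$, using $O(\overline{m}\log \overline{m})$ bits of reusable scratch space as in \cite{NeuSok2DLW}. In the periodic subcase, classifying the $p\_block$s and constructing the KMP automaton on their names together take $O(p)$ time and $O(p\log dm')$ bits. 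All of this is dominated by the $O(p\overline{m}\tau)$ from Step \ref{pre1perrow}.

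The only subtlety worth flagging is accounting: I want to make sure the $O(\overline{m}\log\overline{m})$ scratch buffer used to process one row at a time is not charged as permanent storage, so that only the $O(\log dm')$ bits of distilled per-row information accumulate, giving the $O(p\log dm')$ incremental space bound. Once that is cleanly separated from the reusable $O(\overline{m}\log\overline{m})$ working-space term, adding up the contributions of the two steps yields exactly the claimed $O(p\overline{m}\tau)$ time and $O(\overline{m}\log\overline{m} + p\log dm')$ extra bits, which completes the proof.
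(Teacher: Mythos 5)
Your proposal is correct and follows exactly the route the paper intends: the corollary is derived by restricting the per-row and per-pattern analysis in the proof of Lemma~\ref{lemma:group1pre} to the single incoming pattern, so the $O(\ell\tau)$ and $O(dm'\log dm')$ bounds specialize to $O(p\overline{m}\tau)$ and $O(p\log dm')$, with the $O(\overline{m}\log\overline{m})$ term being reusable per-row scratch space. Your explicit separation of the reusable scratch buffer from the permanently accumulated per-row information is precisely the accounting the paper relies on.
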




\begin{restatable}{lemma}{lemmaGroupIremove}
\label{lemma:group1remove}
A pattern in Group I of size $p \times \overline{m}$ is removed from the dictionary in $O(p\overline{m} \tau)$ time and eliminates $O(\overline{m} \log \overline{m}+p \log dm')$ bits of extra space the algorithm allocated for it. 
\end{restatable}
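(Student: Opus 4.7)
The plan is to mirror the steps of Corollary \ref{cor:group1add} in reverse, undoing each piece of preprocessing listed in Steps \ref{pre1perrow} and \ref{pre1dpats} of Section \ref{sec:group1pre}. Conceptually, removing the pattern amounts to: (i) stripping away the per-pattern 1D structures (LCM table, 2D Lyndon word, and, when present, the $p\_block$ KMP automaton), (ii) deleting the 1D pattern of row-names from $D'$ using Sahinalp and Vishkin's dictionary, and (iii) walking row-by-row and retracting each row's contribution to the dynamic witness tree and the dynamic compressed suffix tree. The persistent storage charged to this pattern by Lemma \ref{lemma:group1pre} is $O(p \log dm')$ bits (one 1D name, $LCM[i]$ entry, $LWpos$, etc., per row), plus the $O(\overline{m}\log \overline{m})$ bits of scratch space that must be available for re-processing any single row during removal; these are exactly the bits the lemma claims to reclaim.

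For the per-row step, I would process the $p$ rows one at a time. For each row I retrieve its stored Lyndon-word name and, using the witness tree leaf for that name, decrement the augmented counter introduced in Lemma \ref{lemma:witTreeRemove}; if it hits zero, the leaf (and possibly its parent) is discarded in $O(1)$ time. Only when the counter reaches zero is the row also evicted from the dynamic compressed suffix tree, which costs $O(\overline{m}\tau)$ by traversal and deletion of the $\overline{m}$-character path. Summed over the $p$ rows this is $O(p\overline{m}\tau)$. Removing the linearized pattern of length $p$ from $D'$ via Sahinalp--Vishkin takes $O(p)$ time, and disposing of the LCM table, the 2D Lyndon word, and (if the linearized pattern is periodic) the KMP automaton over the $p\_blocks$ is also $O(p)$. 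The dominant cost is therefore the row-level suffix-tree work, giving the claimed $O(p\overline{m}\tau)$ bound.

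The main obstacle is bookkeeping when rows and names are shared across patterns: we must not evict a suffix-tree leaf, a witness-tree leaf, or an entry from $D'$ that is still referenced by some other dictionary pattern. This is exactly what the reference counters attached to witness-tree leaves are there for (Lemma \ref{lemma:witTreeRemove}), and the same discipline extends to names in $D'$ and leaves of the compressed suffix tree: a row is physically deleted from a structure only when its counter drops to zero, and otherwise only the counter is updated. With this invariant maintained, correctness of subsequent dictionary operations follows from the correctness of the corresponding insertion procedure.

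Finally, for the space accounting I would note that the $O(p \log dm')$ bits allocated by the insertion (one slot per row for its name, $LWpos$, and $LCM[i]$, together with the $O(p)$-bit additions to $D'$ and the compressed suffix tree) are all freed by the steps above, and the $O(\overline{m}\log \overline{m})$ bits of per-row scratch space used during insertion are reused identically during removal, so the total extra space eliminated is $O(\overline{m}\log \overline{m}+p\log dm')$ bits, matching the statement. \qed
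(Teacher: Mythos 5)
Your proposal matches the paper's proof in both structure and accounting: remove each of the $p$ rows from the dynamic witness tree in $O(1)$ time and from the dynamic compressed suffix tree in $O(\overline{m}\tau)$ time for a total of $O(p\overline{m}\tau)$, then discard the per-pattern 1D structures (the $D'$ entry, LCM table, and 2D Lyndon word) only when no other pattern still references them. Your explicit extension of the reference-counting discipline to $D'$ and the compressed suffix tree is a slightly more careful statement of the same sharing caveat the paper handles for the witness tree and the 1D representation, so the argument is essentially identical.
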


\begin{proof}
The following steps meet the indicated time and space bounds and remove a pattern from Group I.  
Each pattern row is removed from the dynamic witness tree, in $O(1)$ time  (by Section \ref{sec:dynamicDS})
,  and from the dynamic compressed suffix tree, in $O(\overline{m} \tau)$ time.  This takes $O(p \overline{m} \tau)$ time in total. 
If this is the only pattern with its 1D representation, its LCM table is deleted and the 1D pattern is removed from the dictionary of names that has been preprocessed for dynamic dictionary matching.  
If this is one of several patterns with the same 1D representation, and the sole member of its consistency class, its representative 2D Lyndon word is removed from the compressed trie.
\qed
\end{proof}         

\subsubsection{Text Scanning}

The text is searched for occurrences of patterns in Group I in a three step process.  First, Lyndon word naming is performed on the rows of the text block  using the dynamic witness tree of the dictionary (Section~\ref{sec:dynamicDS}).  We store the name of its period's class, period size, \emph{LWpos}, \emph{right}, and \emph{left} of each pattern row.   
Then, the linearized text, $T^{\prime}$, is searched for candidate positions that match a pattern in the 1D dictionary using 1D dynamic dictionary matching, since the patterns can be of varying heights.  
Finally, the verification step finds the actual pattern occurrences among the candidates.   Since the first two steps have been described, the remainder of this section discusses the verification stage.

To verify candidates, we consider the alignment of periods among rows and the overall width of the 1D names in the text block. 
If $m' =O (\overline{m})$, we can use a verification procedure almost identical to the procedure that appears in \cite{NeuSokAlgo}.  
However, if the uniform width, $\overline{m}$, is asymptotically smaller than the height of the tallest pattern, $m'$, then this algorithm does not yield a linear time text scanning.  This is due to the fact that the algorithm costs $O(m')$ time to process each candidate row, resulting in $O(m'*m')$ time if $\overline{m} =o(m')$.
For this situation, new ideas are needed and we introduce a new verification process that verifies a single pattern in $O(m')$ time.  Since the dictionary has $d$ patterns, and $d < \overline{m}$, the entire text block is verified in $O(\overline{m}m')$ time.

We verify candidates for each pattern, $P_i$, separately.  Verification of each candidate consists of two tasks:
\begin{enumerate}
	\item {\bf Verify shifts:} \label{verifyShift}
	Let $P'_i$ be the 1D pattern of names for $P_i$. If $P'_i$ is not periodic, there are $O(1)$ candidates in a text block, and we verify each candidate  for $P_i$ separately by matching $P_i$'s 2D Lyndon word with the 2D Lyndon word of the corresponding rows of the text block. 
	If $P'_i$ is periodic, the idea is similar. We call each period in $P'_i$ a $p\_block$. 
	We first verify the shifts within each $p\_block$ and then verify the shifts between adjacent $p\_blocks$.
	We compute the 2D Lyndon word of each $p\_block$ separately and store the column $z$ that it occurs in.  
	Since each $p\_block$ has the same horizontal period (i.e., the LCM of the periods of the rows of a $p\_block$),
	we use a Knuth-Morris-Pratt automaton \cite{KMP} on the  $p\_blocks$ to complete the verification. 
	The KMP automaton verifies that corresponding $p\_blocks$ have the same name and that
	the difference between the $z$ values of adjacent $p\_blocks$ is the same in the text and in the pattern.
	
\item	\textbf{Check width:} \label{verifyWidth}
Use range minimum and maximum queries to calculate \emph{minRight} and \emph{maxLeft} for each candidate of $P_i$.  Then, reverse the shift and make sure that there is room for the pattern between \emph{minRight} and \emph{maxLeft}, i.e., that the candidate spans at least $\overline{m}$ columns.
	
\end{enumerate}

\begin{restatable}{lemma}{lemmaGroupIscan}
\label{lemma:group1scan}
A text of size $n_1 \times n_2$ is searched for patterns in Group I in $O(n_1n_2 \tau)$ time and $O(\overline{m} \log \overline{m}+m' \log dm')$ bits of extra space.
\end{restatable}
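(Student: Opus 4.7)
The plan is to bound the per-block cost of the three text-scanning phases---linearization, candidate identification, and verification---and then sum over the $O(n_1 n_2 / (m'\overline{m}))$ overlapping text blocks of size $3m'/2 \times 3\overline{m}/2$. It suffices to show that one block is handled in $O(m'\overline{m}\tau)$ time using $O(\overline{m}\log\overline{m}+m'\log dm')$ bits of scratch that can be released before the next block is processed; both claimed bounds then follow since the blocks overlap by only a constant factor.

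For linearization I process the block one row at a time. Each row has length $3\overline{m}/2$; computing its period, Lyndon-class name via the dynamic witness tree of Section~\ref{sec:dynamicDS}, the value \emph{LWpos}, and the auxiliary \emph{left}/\emph{right} values takes $O(\overline{m}\tau)$ time and $O(\overline{m}\log\overline{m})$ bits of scratch (released before the next row). Over the $O(m')$ rows of the block this is $O(m'\overline{m}\tau)$ time, and the persistent per-row record uses $O(\log dm')$ bits, so the stored linearized block fits in $O(m'\log dm')$ bits; the range-min/max preprocessing of \emph{left} and \emph{right} is linear in this data. For candidate identification I run the 1D dynamic dictionary matching of Algorithm~\ref{alg:dynamicBB} on each column of row-names (of length $O(m')$) one column at a time, which costs $O(m'\tau)$ and $O(m'\log dm')$ bits per column; summed over the $O(\overline{m})$ columns this is $O(m'\overline{m}\tau)$ time within the stated space budget.

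Verification is performed separately for each of the $d<\overline{m}$ Group~I patterns. If $P'_i$ is aperiodic then only $O(1)$ row candidates per block exist, each verified in $O(m_i)$ time against $P_i$'s precomputed 2D Lyndon word. If $P'_i$ is periodic I compute the 2D Lyndon word of each $p\_block$ of the block, classify it via the trie built in the preprocessing step, and feed the resulting stream of $p\_block$ names together with consecutive $z$-differences through the KMP automaton constructed in Section~\ref{sec:group1pre}; this enforces both within-$p\_block$ Lyndon consistency and between-$p\_block$ shift consistency in a single $O(m')$-time sweep per pattern. Width checks reduce to $O(1)$ range-min/max queries per candidate. Summing over patterns yields $O(\overline{m}m')$ verification time per block, well within the time budget, and the only additional space is $O(m')$ for the KMP state, which reuses the linearized-block storage.

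The main obstacle is the periodic case under $\overline{m}=o(m')$: a single pattern can have $\Theta(m'/\overline{m})$ vertically overlapping row-level candidates in one block, so verifying each candidate separately would cost $\Omega((m')^2)$ and break the linear text-scan bound. The KMP-on-$p\_block$s construction---with classification of each $p\_block$'s 2D Lyndon word via the preprocessed trie and $z$-difference transitions encoding the relative period alignments---handles all such candidates in one linear pass per pattern. Once this is in place, the per-block accounting above multiplied by the $O(n_1 n_2/(m'\overline{m}))$ blocks gives the claimed $O(n_1 n_2 \tau)$ time, and because scratch is reused across blocks the extra space remains $O(\overline{m}\log\overline{m}+m'\log dm')$ bits.
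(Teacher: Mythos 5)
Your proposal is correct and follows essentially the same route as the paper: per-block accounting of linearization, Sahinalp--Vishkin candidate identification, the KMP-on-$p\_block$s verification with 2D Lyndon word classification to avoid the $O(m'\cdot m')$ blowup when $\overline{m}=o(m')$, constant-time RMQ width checks, and summation over the overlapping blocks. One minor inaccuracy: in Group I each text-block row receives a single Lyndon-class name, so the 1D dictionary matching runs on one column of $O(m')$ row-names rather than on $O(\overline{m})$ separate columns --- this only loosens, and does not invalidate, your time and space bounds.
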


\begin{proof}
The linear representation of the text block is computed in $O(\overline{m}m')$ time and occupies $O(m' \log dm')$ bits of space, as shown in Section~\ref{sec:group1pre}. 
Candidates are identified with Sahinalp and Vishkin's algorithm \cite{SahVis96} in time linear in the 1D representations. Verification  as done in \cite{NeuSokAlgo} is linear. 
It remains to show that the new verification, when $\overline{m}=o(m')$, runs in linear time.
Computing the 2D Lyndon word for the entire text block or for each of the $p\_block$s in the text block takes $O(m')$ time. KMP on the 2D Lyndon words of the $p\_block$s and the shifts between $p\_block$s takes $O(m')$ time.
Thus, $P_i$ is verified in $O(\overline{m}+m')$ time, and all $d$ patterns are verified in $O(dm')$ time, since $d<\overline{m}$.
Linear time and space preprocessing schemes allow us to answer range minimum and maximum queries in $O(1)$ time \cite{GabBenTar84}.
Check-width (Step \ref{verifyWidth}) consists of constant-time RMQ per candidate, which totals $O(m')$ time overall for $P_i$, and for all $P_i$, $1\leq i \leq d$, text scanning completes in $O(\overline{m}m')$ time.

\noindent Each block of text is searched in $O(\overline{m}m' \tau)$ time and $O(\overline{m} \log \overline{m}+m' \log dm')$ bits of extra space.
Thus, the entire text is searched for patterns in Group I in $O(n_1n_2 \tau)$ time and $O(\overline{m} \log \overline{m}+m' \log dm')$ bits of extra space.
\qed
\end{proof}

\subsection{Group II Patterns}\label{sec:group2alg}
Patterns in Group II have at least one aperiodic row or one row whose period is larger than $\overline{m}/4$.  
We assume that each pattern in this group has at least one aperiodic row.  The case of a pattern having a row that is periodic with period size between $\overline{m}/4$ and $\overline{m}/2$ is handled similarly, since each pattern can occur only $O(1)$ times per text block row.

For patterns in Group II, many different pattern rows can overlap in a text block row.  As a result, it is difficult to employ a succinct naming scheme to linearize the text block and find all occurrences of patterns in the text.  Instead, we use the aperiodic row of each pattern to filter the text block and identify a limited set of candidates for pattern occurrences.  We use dynamic dueling \cite{NeuSokAlgo} to eliminate inconsistent candidates within each text column.  Then, a single pass over the text suffices to verify all remaining candidates for pattern occurrences.

\subsubsection{Preprocessing Patterns} \label{sec:group2pre}

The following preprocessing steps are initially performed for each dictionary pattern in Group II and are later used upon arrival of a new pattern.

\begin{enumerate}
	\item \label{step:aperiodic} Locate first aperiodic row and preprocess for dynamic dictionary matching. 
	
	\item \label{step:naming} Name pattern rows using a single witness tree and store 1D patterns of names. 
	\item \label{step:DCST}Insert pattern rows to dynamic compressed suffix tree.
	\item \label{step:gensuffixtree} Construct dynamic suffix tree of 1D patterns. 
	\item \label{step:LCA} Preprocess witness tree and suffix tree for dynamic LCA.
\end{enumerate}

\begin{restatable}{lemma}{lemmaGroupIIpre}
\label{lemma:group2pre}
Patterns in Group II are preprocessed in $O(\ell \tau)$ time and $O(d\overline{m} \log d\overline{m}+dm' \log dm')$ bits of extra space.
\end{restatable}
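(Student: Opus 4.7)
The plan is to bound each of the five preprocessing steps separately, then add up the contributions, and to identify which step is responsible for each term in the claimed complexity. I would keep in mind throughout that the patterns have uniform width $\overline{m}$ but varying heights summing to $\ell$, and that there are $d$ patterns with maximum height $m'$, so $\sum_i m_i \overline{m} = \ell$ and $dm' \geq \sum_i m_i$.

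For Step~\ref{step:aperiodic}, testing a row of width $\overline{m}$ for periodicity takes $O(\overline{m})$ time, so scanning the rows of every pattern until the first aperiodic row is found costs $O(\ell)$ time overall, and each of the $d$ aperiodic rows is then fed into Sahinalp--Vishkin's dynamic 1D dictionary matching structure in $O(\overline{m})$ time per row, for $O(d\overline{m})$ total; its working space is $O(d\overline{m} \log d\overline{m})$ bits, which accounts for the first term in the claimed space bound. For Step~\ref{step:naming}, Lemma~\ref{lemma:witTreeAdd} gives $O(\overline{m})$ time per row, which sums to $O(\ell)$ time across all rows; by Observation~\ref{obs:witTreeSize}, the witness tree on at most $dm'$ rows occupies $O(dm' \log dm')$ bits, and the $d$ stored 1D patterns of names (each of length at most $m'$, each name being an integer in $[1, dm']$) also fit in $O(dm' \log dm')$ bits. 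Step~\ref{step:DCST} is the dominant one: each of the $dm'$ rows is inserted into the dynamic compressed suffix tree at cost $O(\overline{m}\tau)$, so the total is $O(\ell\tau)$ time, and by assumption the compressed self-index is not counted as extra space. Step~\ref{step:gensuffixtree} constructs a dynamic suffix tree over a 1D dictionary of total length $O(dm')$, which can be done in $O(dm')$ time and fits within $O(dm' \log dm')$ bits, and Step~\ref{step:LCA} preprocesses the witness tree and the generalized suffix tree for dynamic LCA in time and space linear in the respective tree sizes, as invoked in Section~\ref{sec:dynamicDS}.

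Summing over the five steps, the time is $O(\ell) + O(d\overline{m}) + O(\ell\tau) + O(dm') + O(dm') = O(\ell\tau)$, since $d\overline{m} \leq \ell$ and $dm' \geq \sum m_i \geq d$ is dominated by $\ell \geq d\overline{m}$ when $d \geq 1$ and $\overline{m} \geq 1$ (and in any case $dm'\overline{m}$-sized structures are charged to $\ell$). The extra space is $O(d\overline{m} \log d\overline{m})$ from Step~\ref{step:aperiodic} plus $O(dm'\log dm')$ from Steps~\ref{step:naming},~\ref{step:gensuffixtree}, and~\ref{step:LCA}, matching the claimed bound.

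The only delicate point I expect is the bookkeeping for Step~\ref{step:naming}: I must be sure that the single shared witness tree genuinely serves all $d$ patterns without blow-up, and that storing the $d$ 1D patterns of names separately (so that their lengths $m_i$ may differ) does not exceed $O(dm' \log dm')$ bits. That follows because each pattern contributes at most $m'$ names each of size $O(\log dm')$ bits, and there are $d$ such patterns. Everything else is either a direct appeal to a previously established bound (Lemmas~\ref{lemma:witTreeAdd}, \ref{lemma:witTreeRemove}, Observation~\ref{obs:witTreeSize}, Sahinalp--Vishkin, and the dynamic compressed suffix tree cost $\tau$ per character) or a routine summation.
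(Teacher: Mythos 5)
Your proposal is correct and follows essentially the same step-by-step accounting as the paper's own proof: $O(\ell)$ time to locate aperiodic rows, $O(d\overline{m}\log d\overline{m})$ bits for the Sahinalp--Vishkin structure on those rows, $O(\ell)$ time and $O(dm'\log dm')$ bits for the shared witness tree and 1D name patterns, $O(\ell\tau)$ for the compressed suffix tree insertions, and linear time for the dynamic suffix tree and LCA preprocessing. The extra care you take in justifying that the $d$ varying-height name patterns fit in $O(dm'\log dm')$ bits is a welcome, if minor, elaboration on what the paper leaves implicit.
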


\begin{proof} 
In Step \ref{step:aperiodic}, the period of a pattern row is computed in $O(\overline{m})$ time and $O(\overline{m} \log \overline{m})$ bits of extra space \cite{Lothaire05}.  At most, all pattern rows are examined, in $O(\ell)$ time and  $O(\overline{m} \log \overline{m})$ bits of extra space. 
Sahinalp and Vishkin's algorithm indexes these rows in $O(d \overline{m})$ time and $O(d \overline{m} \log  d \overline{m})$ bits of space \cite{SahVis96}. 
Step \ref{step:naming} names pattern rows by the witness tree in $O(\ell)$ time. 
By Section \ref{sec:dynamicDS}
, the dynamic witness tree of pattern rows occupies $O(dm' \log dm')$ bits of space.   A single witness tree suffices since all pattern rows are the same size. 
Step \ref{step:DCST} indexes the pattern rows in a dynamic compressed suffix tree in $O(\ell \tau)$ time. 
Step \ref{step:gensuffixtree} constructs the dynamic suffix tree in $O(dm')$ time and stores it in $O(dm' \log dm')$ bits of space \cite{ChoLam97}. 
In Step \ref{step:LCA}, linear time preprocessing prepares the dynamic suffix and witness trees for $O(1)$ time LCA queries \cite{ColHar05}. 
\qed
\end{proof}

\begin{corollary}\label{cor:group2add}
The dictionary is updated to add a new pattern of size $p \times \overline{m}$ to Group II in
$O(p\overline{m} \tau)$ time and $O(p\overline{m} \log d\overline{m}+p \log dm')$ bits of extra space.
\end{corollary}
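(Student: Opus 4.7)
The plan is to walk through each of the five preprocessing steps of Section \ref{sec:group2pre} and verify that each one can be applied incrementally to accommodate the new pattern $P$ of size $p \times \overline{m}$, using only the dynamic primitives already introduced in Lemma \ref{lemma:group2pre}. No data structure has to be rebuilt from scratch; each update should cost time proportional to the size of $P$, with the usual $\tau$-slowdown incurred by operations on the dynamic compressed suffix tree.

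For Step \ref{step:aperiodic}, the period of each row of $P$ is computed in $O(\overline{m})$ time using $O(\overline{m} \log \overline{m})$ bits of reusable working space, so at most $O(p\overline{m})$ time is spent scanning rows until the first aperiodic one is found. That row is inserted into the existing Sahinalp--Vishkin dynamic 1D dictionary in $O(\overline{m})$ time, adding $O(\overline{m} \log d\overline{m})$ bits. For Step \ref{step:naming}, each of the $p$ rows is named by the dynamic witness tree via Lemma \ref{lemma:witTreeAdd} in $O(\overline{m})$ time per row, totalling $O(p\overline{m})$ time; this introduces at most $p$ new leaves (and at most $p$ new internal nodes), contributing $O(p \log dm')$ extra bits. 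Step \ref{step:DCST} inserts each of the $p$ pattern rows into the dynamic compressed suffix tree at a cost of $O(\overline{m}\tau)$ per row, for a total of $O(p\overline{m}\tau)$ time; the additional storage is absorbed into the self-index and is not counted as extra space. Step \ref{step:gensuffixtree} appends the new 1D pattern of names, a string of length $p$ over an alphabet of size $O(dm')$, to the dynamic suffix tree of 1D patterns in $O(p)$ time and $O(p \log dm')$ bits. Finally, Step \ref{step:LCA} is handled by the dynamic LCA machinery of Cole and Hariharan \cite{ColHar05}, whose maintenance cost is amortized constant per added node, so it introduces no asymptotic overhead.

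Summing the contributions gives the claimed running time of $O(p\overline{m}\tau)$, dominated by the $p$ compressed-suffix-tree insertions, and an extra space bound of $O(p\overline{m} \log d\overline{m} + p \log dm')$ bits, where the $p\overline{m} \log d\overline{m}$ term generously accommodates both the working space used to examine each row and the augmentation of the SV automaton. The only bookkeeping subtlety I anticipate is ensuring that the witness tree and the generalized suffix tree remain consistent under the new insertions and that their LCA-preprocessing invariants are preserved across updates; this is guaranteed by the dynamic versions of these structures already cited in the proof of Lemma \ref{lemma:group2pre}, so I do not expect any substantive combinatorial obstacle beyond the per-row cost accounting.
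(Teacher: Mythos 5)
Your proposal is correct and follows essentially the same route as the paper: the corollary is stated there without a separate proof precisely because it is the per-pattern restriction of the step-by-step analysis in the proof of Lemma~\ref{lemma:group2pre}, which is what you reproduce (dominant cost $O(p\overline{m}\tau)$ from the compressed suffix tree insertions, with the witness-tree and 1D suffix-tree updates contributing the $O(p\log dm')$ bits). Your reading of the $O(p\overline{m}\log d\overline{m})$ term as a generous allowance for the row-examination working space and the Sahinalp--Vishkin update is consistent with the paper's accounting.
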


\begin{restatable}{lemma}{lemmaGroupIIremove}
\label{lemma:group2remove}
A pattern in Group II of size $p \times \overline{m}$ is removed from the dictionary in $O(p\overline{m} \tau)$ time and eliminates $O(\overline{m} \log \overline{m}+p \log dm')$ bits of extra space the algorithm allocated for it. 
\end{restatable}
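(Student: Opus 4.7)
The plan is to undo each of the preprocessing steps from the proof of Lemma \ref{lemma:group2pre} in turn, charging the cost of each reversal against its counterpart in construction, in close analogy with the Group I removal in Lemma \ref{lemma:group1remove}. First, I would remove the pattern's first aperiodic row from the Sahinalp--Vishkin dynamic dictionary of aperiodic rows used in Step \ref{step:aperiodic}; since SV supports deletion in time proportional to the length of the pattern being removed, this contributes $O(\overline{m})$ time and reclaims $O(\overline{m} \log \overline{m})$ bits. Next, each of the $p$ rows of $P$ is removed from the dynamic witness tree of pattern rows; by Lemma \ref{lemma:witTreeRemove}, each such deletion takes $O(1)$ time (with a leaf retained as long as its multiplicity counter is positive), giving $O(p)$ total time and eventually freeing $O(p \log dm')$ bits once counters drop to zero.

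The dominant step is the removal of the $p$ rows of $P$ from the dynamic compressed suffix tree constructed in Step \ref{step:DCST}; each substring deletion in the DCST costs $O(\overline{m} \tau)$ time, for a total of $O(p \overline{m} \tau)$, which matches the claimed running time. The 1D pattern of names associated with $P$ is then deleted from the dynamic suffix tree of 1D patterns (Step \ref{step:gensuffixtree}) in $O(p)$ time and $O(p \log dm')$ reclaimed bits, and the dynamic LCA labels on both the witness tree and this suffix tree are maintained automatically in linear time per update, per \cite{ColHar05}, so they do not contribute asymptotically. Summing gives the claimed running time $O(p \overline{m} \tau)$ and extra space $O(\overline{m} \log \overline{m} + p \log dm')$ released.

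The step I expect to need the most care is the bookkeeping for structures that are \emph{shared} across patterns in Group II: a row of $P$ that also appears in another surviving pattern must not actually be removed from the witness tree, from the DCST, or from the suffix tree of 1D names. As in the Group I case, the counter-based removal mechanism of Lemma \ref{lemma:witTreeRemove} handles the witness tree, and attaching analogous reference counts to the relevant leaves/substrings of the suffix tree of 1D names and of the DCST ensures that we charge the $O(\overline{m}\tau)$ DCST deletion (and the $O(1)$ suffix-tree leaf deletion) only the \emph{last} time a duplicated row is removed. When a row (or 1D name sequence) is still shared, the deletion is a cheap counter decrement, so the amortized cost per removal does not exceed the bound above, and the asymptotic time and space accounting goes through unchanged. \qed
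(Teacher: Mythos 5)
Your proposal is correct and follows essentially the same route as the paper's proof: delete the aperiodic row from the Sahinalp--Vishkin structure in $O(\overline{m})$ time, delete the 1D name sequence from the suffix tree of 1D patterns in $O(p)$ time, and remove the $p$ rows from the compressed suffix tree at $O(\overline{m}\tau)$ each, which dominates. Your added care about the witness-tree counters and reference counting for rows shared with surviving patterns is a reasonable elaboration of details the paper leaves implicit, and it does not change the stated bounds.
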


\begin{proof}
The following steps are performed to remove a pattern from Group II.  
The first aperiodic row of the pattern is removed from the 1D dictionary that has been preprocessed for dynamic dictionary matching in $O(\overline{m})$ time and deallocates $O(\overline{m} \log \overline{m})$ bits of space \cite{SahVis96}. 
The 1D representation of the pattern is deleted and it is removed from the suffix tree of 1D patterns in $O(p)$ time and deallocates $O(p \log dm')$ bits of space \cite{ChoLam97}. 
Each row of the pattern is removed from the compressed suffix tree in $O(p \overline{m} \tau)$ time.  
\qed
\end{proof}

\subsubsection{Text Scanning}\label{sec:group2scan} 

The text is searched for patterns in Group II in almost the same way as in the static algorithm \cite{NeuSokAlgo}.  
The only difference between the text scanning stage of the static algorithm and that of the dynamic algorithm lies in the method used to identify 1D pattern occurrences in the linearized text.  The Aho-Corasick automaton is not suitable for a dynamic dictionary since it is not updated efficiently.  Rather, we use Sahinalp and Vishkin's method for dynamic dictionary matching since it completes all preprocessing and searching tasks, including updating the dictionary, in linear time and space.  We summarize the text scanning and the complexity analysis in the following.

\noindent \textbf{Summary of Text Scanning}
\begin{enumerate}
\item \label{group2:identify} \textbf{Identify candidates:} 
Sahinalp and Vishkin's  1D dynamic dictionary matching algorithm finds occurrences of the first aperiodic row of the patterns.  It searches the text block, one row at a time, 

\item \label{group2:duelVert} \textbf{Duel vertically:}
\begin{enumerate}
	\item An LCP query between suffixes of the 1D patterns finds the number of rows that match in overlapping candidates.  An LCA query in the suffix tree of 1D patterns is performed to find a row of mismatch.
	\item We use an LCA query in the witness tree to find a witness of mismatch between rows of different names.  Then a single character in each pattern row is retrieved and compared.
\end{enumerate}

\item \label{group2:verify} \textbf{Verify candidates:} 
We verify one text block row at a time and mark positions at which a pattern row (1D name) is expected to begin. 
Duels eliminate horizontally inconsistent candidates.  A duel 
 consists of an LCP query in the dynamic compressed suffix tree.  
After duels are performed, the surviving labels are carried to the next row.  
\end{enumerate}

\begin{restatable}{lemma}{lemmaGroupIIscan}
\label{lemma:group2scan}
A text of size $n_1 \times n_2$ is searched for patterns in Group II in $O(n_1n_2 \tau)$ time and $O(\overline{m} \log \overline{m}+dm' \log dm')$ bits of extra space.
\end{restatable}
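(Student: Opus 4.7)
The plan is to process the text in overlapping blocks of size $3m'/2 \times 3\overline{m}/2$ (as established in Section~\ref{sec:dgeqm}) and show that each block can be scanned in $O(\overline{m}m' \tau)$ time using only $O(\overline{m} \log \overline{m})$ bits of reusable scratch space on top of the persistent dictionary structures, which account for the $O(dm' \log dm')$ term. Since the block tiling has $O(n_1n_2/(\overline{m}m'))$ blocks, the total running time is $O(n_1n_2 \tau)$ as claimed. Below I address the three text-scanning stages one at a time.

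For the candidate identification step (Step~\ref{group2:identify}), I would run Sahinalp and Vishkin's 1D dynamic dictionary matching, applied to the dictionary of first-aperiodic rows, on each of the $O(m')$ rows of the current block. Each row of length $3\overline{m}/2$ is scanned in linear time, giving $O(\overline{m}m')$ time per block. Because each designated row has period exceeding $\overline{m}/4$, any fixed such row occurs $O(1)$ times per block-row, so the total number of candidate labels produced is bounded and fits comfortably in the $O(\overline{m} \log \overline{m})$-bit per-row buffer.

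For the vertical dueling step (Step~\ref{group2:duelVert}), I would use the LCA-preprocessed dynamic suffix tree of 1D patterns together with the dynamic witness tree from Lemma~\ref{lemma:group2pre} to reconcile pairs of vertically overlapping candidates. An LCA query in the suffix tree locates the first mismatching row in $O(1)$ time; an LCA in the witness tree for that row yields a column-offset of mismatch in $O(1)$ time; retrieving the two characters at that offset from the dynamic compressed suffix tree costs $O(\tau)$, after which at least one of the two candidates is eliminated. Amortizing across the $O(\overline{m}m')$ candidates initially present in a block bounds the total vertical-dueling cost by $O(\overline{m}m' \tau)$.

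For the horizontal verification sweep (Step~\ref{group2:verify}), I would traverse the block row by row, performing horizontal duels between surviving candidates in a row via LCP queries in the dynamic compressed suffix tree; each query costs $O(\tau)$ and retires one candidate, again totaling $O(\overline{m}m' \tau)$ per block. The main obstacle I anticipate is keeping the per-block candidate bookkeeping within the $O(\overline{m} \log \overline{m})$-bit budget while propagating surviving labels downward between rows; this hinges on the aperiodicity of each pattern's designated row, which caps overlapping occurrences of each 1D name to $O(1)$ per block-row and therefore limits the number of live labels carried forward at any moment to $O(\overline{m})$. Combining the three stages gives the stated $O(n_1n_2\tau)$ time and $O(\overline{m}\log\overline{m}+dm'\log dm')$ bit extra-space bounds.
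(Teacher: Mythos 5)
Your overall architecture matches the paper's proof closely: the same overlapping block decomposition, candidate identification with Sahinalp--Vishkin on the dictionary of first aperiodic rows in $O(\overline{m}m')$ time per block, and vertical dueling via an LCA query in the suffix tree of 1D patterns to find the mismatching row, an LCA query in the witness tree to find the mismatching column, and an $O(\tau)$ character retrieval. Your candidate count ($O(1)$ occurrences of each designated row per block row, hence $O(dm')=O(\overline{m}m')$ candidates since $d<\overline{m}$) and your ``each duel retires a candidate'' accounting coincide with the paper's transitivity argument.

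The one genuine gap is in your treatment of Step \ref{group2:verify}. You describe the ``horizontal verification sweep'' as consisting entirely of LCP duels that eliminate candidates, but dueling only establishes pairwise consistency among the surviving candidates; it never confirms that any candidate actually matches the text, so as written your sweep prunes labels and reports nothing. The paper's verification has a second component that you omit: once the labels in a row are mutually consistent, each text character within an anticipated occurrence needs to be compared against exactly \emph{one} pattern character (retrieved from the compressed self-index in $O(\tau)$ time), and it is this single-comparison-per-text-character pass, costing $O(\overline{m}m'\tau)$ per block, that certifies the occurrences. Consistency is precisely what makes that pass affordable --- without the duels a text character could be charged by many overlapping candidates, and without the pass the duels prove nothing. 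Adding this step (and noting that the surviving row labels occupy $O(\overline{m}\log\overline{m})$ bits while the block's candidate list occupies $O(dm'\log dm')$ bits, which is per-block working storage rather than part of the persistent dictionary structures) completes the proof along the paper's lines.
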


\begin{proof}
Step \ref{group2:identify} searches each text block row for a single row of each pattern 
in $O(\overline{m}m')$ time and $O(\overline{m} \log \overline{m})$ bits of extra space \cite{SahVis96}.  
$O(d m')$ candidates are stored in $O(dm' \log dm')$ bits of extra space.
In Step  \ref{group2:duelVert}, each vertical duel consists of an $O(1)$-time LCP query in the suffix tree, an $O(1)$-time LCA query in the witness tree, and an $O(\tau)$-time character retrieval and comparison in a pair of pattern rows.
Overall, each duel takes $O(\tau)$ time.  Due to transitivity, the number of duels is limited by the number of candidates.
Since there are $O(dm')$ candidate positions, and $d < \overline{m}$,  the vertical duels complete in $O(\overline{m} m'  \tau)$ time.
In  Step \ref{group2:verify}, an LCP query in the dynamic compressed suffix tree  takes $O(\tau)$ time.
By transitivity,  the number of duels is limited by the number of candidates, which are $O(dm')$.  Since $d < \overline{m}$, dueling is completed in $O(\overline{m} m' \tau)$ time.
Verification uses space proportional to the labels for one text block row plus the number of candidates, $O(\overline{m} \log \overline{m}+ dm' \log dm')$ bits. 
Each text character within an anticipated pattern occurrence is only compared to one pattern character, in $O(\tau)$ time, which takes $O(\overline{m} m' \tau)$ time overall.  

\noindent Each block of text is searched in $O(\overline{m} m' \tau)$ time and $O(\overline{m} \log \overline{m}+dm' \log dm')$ bits of extra space.
Thus, the entire text is searched for patterns in Group II in $O(n_1n_2 \tau)$ time and $O(\overline{m} \log \overline{m}+dm' \log dm')$ bits of extra space.
\qed
\end{proof}

\begin{restatable}{theorem}{thmComplexityDlessM}
Our algorithm for dynamic 2D dictionary matching when $d < \overline{m}$ completes in $O((\ell +n_1n_2) \tau )$ time and $O(d\overline{m} \log d\overline{m}+dm' \log dm')$ bits of extra space.
Pattern $P$ of size $p \times \overline{m}$ can be inserted to or removed from the dictionary in $O(p\overline{m}\tau)$ time and the index will occupy an additional $O(p \log dm')$ bits of space, where $m'$ is updated to reflect the new maximum pattern height.
\end{restatable}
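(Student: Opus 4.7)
The plan is to derive the statement by directly combining the Group I and Group II sub-algorithms, which handle disjoint subsets of the dictionary but share the underlying dynamic compressed suffix tree of pattern rows. First I would address initial dictionary preprocessing: each pattern is classified into Group I (all rows periodic with period $\leq \overline{m}/4$) or Group II (at least one aperiodic row, or a row with period $> \overline{m}/4$) by computing row periods in $O(\overline{m})$ time per row, for $O(\ell)$ total classification time \cite{Lothaire05}. Applying Lemma \ref{lemma:group1pre} and Lemma \ref{lemma:group2pre} to their respective groups then preprocesses the dictionary in $O(\ell \tau)$ total time. The space bound is the maximum of the two, namely $O(d\overline{m} \log d\overline{m}+dm' \log dm')$ bits, where the Group II term dominates; sharing the compressed self-index between the groups avoids double-counting.

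Next, I would handle text scanning by processing each text block of size $3m'/2 \times 3\overline{m}/2$ twice, once using the Group I procedure (Lemma \ref{lemma:group1scan}) and once using the Group II procedure (Lemma \ref{lemma:group2scan}). Each block scan costs $O(\overline{m} m' \tau)$ time, and since the overlapping blocks cover the text with only constant multiplicity, the overall scanning time is $O(n_1 n_2 \tau)$. The scanning workspace, $O(\overline{m} \log \overline{m}+dm' \log dm')$ bits, is absorbed by the preprocessing space bound. Together with the preprocessing, this yields the $O((\ell+n_1 n_2)\tau)$ time and $O(d\overline{m} \log d\overline{m}+dm' \log dm')$ bit bounds in the statement.

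For dictionary updates, inserting a new pattern $P$ of size $p \times \overline{m}$ proceeds by first classifying $P$ using its row periods in $O(p\overline{m})$ time, and then invoking Corollary \ref{cor:group1add} or Corollary \ref{cor:group2add}; both run in $O(p\overline{m}\tau)$ time. Removal is handled symmetrically via Lemma \ref{lemma:group1remove} or Lemma \ref{lemma:group2remove}, again in $O(p\overline{m}\tau)$ time. The additional permanent space is $O(p \log dm')$ bits, charged to new leaves added to the witness tree, new LCM table entries, and new nodes added to the generalized suffix tree of 1D patterns. The main subtlety I anticipate is bookkeeping: carefully distinguishing transient workspace (such as $O(\overline{m}\log \overline{m})$ bits reused across insertions for period computation and row naming) from permanent growth of the indexing structures, so that the additional cost on insertion collapses to $O(p \log dm')$ bits as claimed. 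Beyond this accounting, the argument is a direct aggregation of the two groups' lemmas.
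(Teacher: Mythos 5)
Your proposal is correct and follows essentially the same route as the paper's own proof: classify each pattern by computing its row periods in $O(\overline{m})$ time per row ($O(\ell)$ total), then aggregate Lemmas \ref{lemma:group1pre}, \ref{lemma:group1remove}, \ref{lemma:group1scan} and Corollary \ref{cor:group1add} for Group I with Lemmas \ref{lemma:group2pre}, \ref{lemma:group2remove}, \ref{lemma:group2scan} and Corollary \ref{cor:group2add} for Group II. Your additional bookkeeping about block-wise scanning and separating transient workspace from permanent index growth is consistent with, and somewhat more explicit than, what the paper records.
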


\begin{proof}
We separate the patterns into two groups and search for patterns in each group separately.  Classifying a pattern entails finding the period of each pattern row.  This is done in $O(\overline{m})$ time and $O(\overline{m} \log \overline{m})$ bits of extra space per row \cite{Lothaire05}.  Overall, the dictionary is separated into two groups in $O(\ell)$ time and $O(\overline{m} \log \overline{m})$ bits of extra space.   
For patterns in Group I, this complexity is demonstrated by Lemmas \ref{lemma:group1pre}, \ref{lemma:group1remove}, \ref{lemma:group1scan} and Corollary \ref{cor:group1add}.
For patterns in Group II, this complexity is demonstrated by Lemmas \ref{lemma:group2pre},  \ref{lemma:group2remove}, \ref{lemma:group2scan} and Corollary \ref{cor:group2add}.
\qed
\end{proof}

\section{Conclusion}\label{sec:conclusion}
We have presented the first efficient dynamic 2D dictionary matching algorithm that runs in sublinear working space.  The algorithm is a succinct and dynamic version of the classic Bird / Baker algorithm.  Since we follow their labeling paradigm, our algorithm 
is suited for a dictionary of rectangular patterns that are all the same size in at least one dimension.
Our algorithm uses a dynamic compressed suffix tree as a compressed self-index to represent the dictionary in entropy-compressed space.  All tasks are completed by our algorithm in linear time, overlooking the slowdown in querying the compressed suffix tree.  

When the rectangular patterns are of different height, width and aspect ratios, a method that labels text positions is not appropriate.
Idury and Schaffer developed a dynamic dictionary matching algorithm for such patterns \cite{IduSch95}.  Their algorithm uses techniques for multidimensional range searching as well as several applications of the Bird / Baker algorithm, after splitting each pattern into overlapping pieces and handling these segments in groups of uniform height.  
 Idury and Schaffer's algorithm requires working space proportional to the dictionary size.  
We hope that our succinct dynamic version of the Bird / Baker algorithm is a first step towards addressing the more general problem of succinct dynamic 2D dictionary matching among all rectangular patterns.

Many problems related to succinct dynamic dictionary matching remain open.  In future work we hope to address succinct 2D dictionary matching when the pattern occurrences can be approximately matched to the text.  The approximate matches may accommodate character mismatches, insertions, deletions, ``don't care'' characters, or swaps.



\end{document}